\newcommand{\OC}{\mathrm{OC}}
\begin{document}
\title{Characteristic Parameters and Special Trapezoidal Words}
%\thanks{Supported by organization x.}}
%
%\titlerunning{Abbreviated paper title}
% If the paper title is too long for the running head, you can set
% an abbreviated paper title here
%
\author{Alma D'Aniello\inst{1}%\orcidID{0000-0003-2403-4363}
\and
Alessandro De Luca\inst{2}%\orcidID{0000-0003-1704-773X}
}
\authorrunning{A. D'Aniello \and A. De Luca}
% First names are abbreviated in the running head.
% If there are more than two authors, 'et al.' is used.
%
\institute{Dipartimento di Matematica e Applicazioni
``R.~Caccioppoli'',\\Università degli Studi di Napoli Federico II, Italy\\
\email{alma.daniello@unina.it}
\and
DIETI, Università degli Studi di Napoli Federico II\\
via Claudio 21, 80125 Napoli, Italy\\
\email{alessandro.deluca@unina.it}}
\maketitle              % typeset the header of the contribution
\begin{abstract}
Following earlier work by Aldo de Luca and others, we study trapezoidal
words and their prefixes, with respect to their characteristic parameters
$K$ and $R$ (length of shortest unrepeated suffix, and shortest length
without right special factors, respectively), as well as their symmetric
versions $H$ and $L$. We consider the distinction between closed (i.e.,
periodic-like) and open prefixes, and between Sturmian and non-Sturmian
ones. Our main results characterize right special and strictly
bispecial trapezoidal words, as done by de Luca and Mignosi for Sturmian words.

\keywords{Trapezoidal word \and Closed word \and Periodic-like word.}
\end{abstract}
\section{Introduction}
Sturmian words are certainly among the most studied objects in combinatorics on
words, thanks to their natural definition, interesting characterizations, and
numerous applications in several fields; see~\cite{ch2acow,rigoFLANS} for surveys.
An infinite word is Sturmian if it has exactly $n+1$ distinct \emph{factors}
(blocks of consecutive letters) for each length $n\geq 0$.

Trapezoidal words, introduced in~\cite{daless,delufin}, are a natural finite
analogue of Sturmian words. They have at most $n+1$ factors of each length $n$,
so that the graph of their \emph{factor complexity} function is in the shape
of an isosceles trapezoid (or triangle), whence their name.

The original definition of trapezoidal words, however, uses
\emph{characteristic parameters} $K$ and $R$. For a finite word $w$, %let
%$\Ff(w)$ denote its set of factors (blocks of consecutive letters).
$K_w$ denotes the length of the shortest unrepeated suffix of $w$,
%(that is, the shortest suffix having no other occurrences as a factor),
whereas $R_w$ denotes the smallest integer $n\geq 0$ such that $w$ has no
\emph{right special} factor of length $n$.
%(recall that a factor $v$ is right special in $w$ if $vx,vy\in\Ff(w)$ for some
%distinct letters $x,y$).
A word $w$ is then trapezoidal if and only if its length $|w|$ verifies
$|w|=K_w+R_w$ (with $|w|\geq K_w+R_w$ true in general,
see~\cite{delufin}). Finite Sturmian words, i.e., factors of Sturmian words,
are trapezoidal, but there exist non-Sturmian trapezoidal words such as $aabb$.

In~\cite{trapezo}, the property of being closed (aka \emph{periodic-like}) or
open is considered for trapezoidal words. In particular, the case of (prefixes
of) the Fibonacci word was completely characterized, and this was later
(cf.~\cite{OCtcs17}) extended to all characteristic Sturmian words. Our first
aim, explored in Section~\ref{sec:OC}, is to extend some of those arguments to
the general trapezoidal case, with respect to the values of characteristic
parameters for closed and open prefixes.

In~\cite{delustr}, a characterization was given for right (resp.~left) special
Sturmian words, i.e., finite words $w$ over $A=\{a,b\}$ such that both extensions
$wa,wb$ (resp.~$aw,bw$) are Sturmian. Previously, in~\cite{delumig},
\emph{strictly bispecial} ones (that is, words $w$ such that
$awa,awb,bwa,bwb$ are all Sturmian) had been characterized; in particular,
these turn out to be the noteworthy family of \emph{central} words.
In~\cite{ficiBisp}, bispecial (i.e., simultaneously right and left special)
Sturmian words were characterized.
Our main objective in Section~\ref{sec:sp} is to give similar characterizations
for special trapezoidal words. Special words in a language are often useful for
dealing with enumerative and structural questions.

\section{Notation and Preliminaries}
Let $A=\{a,b\}$ be an alphabet. The free monoid of all words over $A$ under
concatenation is denoted by $A^*$; its neutral element is the \emph{empty word}
$\varepsilon$. For $w\in A^*$ and $x\in A$, $|w|_x$ denotes the number of
occurrences of $x$ in $w$.

If $w=pvs\in A^*$, we say that $v$ is a \emph{factor} of $w$,
$p$ is a \emph{prefix}, and $s$ is a suffix.
A \emph{border} of $w$ is a word that is simultaneously a proper prefix and
a suffix of $w$. The definitions of factor and
prefix also apply to (right-)\emph{infinite words} over $A$.
A word $u$ is a \emph{right} (resp.~\emph{left}) \emph{special} factor of
a finite or infinite word $w$ over $A$ if $ua$ and $ub$ (resp.~$au$, $bu$)
are both factors of $w$.

As anticipated above, an infinite word is \emph{Sturmian} if it has exactly
$n+1$ distinct factors
of each length $n\geq 0$. Therefore Sturmian words are the simplest aperiodic
words in terms of \emph{factor complexity}, and this is one of the reasons of
interest in their study
(see~\cite{ch2acow,rigoFLANS}). Equivalently, a binary infinite
word is Sturmian if and only if it has exactly one right (resp.~left)
special factor of each length. In particular, a Sturmian word is called
\emph{standard} or \emph{characteristic} if all its left special factors
occur as prefixes.

Among the many known characterizations of factors of Sturmian words, or
\emph{finite Stumian words}, perhaps the most famous and widely used one
deals with \emph{balance}; $w\in A^*$ is Sturmian if and only if there is
no word $u$ such that $aua$ and $bub$ are both factors of $w$. Such a pair
of factors for a non-Sturmian word is called a \emph{pathological pair}
(cf.~\cite{trapezo,daless}).

\emph{Central words} are palindromic prefixes of characteristic Sturmian
words. They enjoy many equivalent definitions and interesting properties
(cf.~\cite{delustr,ch2acow}). In particular, a word is central if and
only if it can be written as $a^n$, $b^n$, or $uabv=vbau$ for some
integer $n\geq 0$ and $u,v\in A^*$; in the latter case, $u$ and $v$ are
central words themselves.

The parameters $K_w$ and $R_w$ defined in the previous section were
introduced in~\cite{delufin}, along with their ``left'' counterparts $H_w$
(length of the shortest unrepeated prefix) and $L_w$ (smallest $n\geq 0$ such
that $w$ has no left special factor of length $n$). As already stated, a
finite word is \emph{trapezoidal} if $|w|=K_w+R_w$, or equivalently if
$|w|=H_w+L_w$ (cf.~\cite{delufin,daless}).

Another noteworthy parameter is the (minimal) \emph{period} $\pi_w$ of a word
$w$, which can be defined by $\pi_w=|w|-|v|$ where $v$ is the longest border
of $w$. Central words can also be characterized in terms of periods;
$w\in A^*$ is central if and only if $|w|=\pi_{wa}+\pi_{wb}-2$
(cf.~\cite{delustr}). It is known (see for example~\cite{chst}) that
\emph{periodic extensions} of a finite Sturmian word $w$, i.e., words $w'$
such that $w$ is a factor of $w'$ and $\pi_{w'}=\pi_w$, are still Sturmian;
however, this property does not extend to trapezoidal words.

\begin{example}
\label{ex:pex}
Let $w=aaababa$. Then $H_w=R_w=3$ and $L_w=K_w=4$, so that $w$ is trapezoidal.
Its period is $\pi_w=6$, but the periodic extension $w'=abaaababa=abw$ is not
trapezoidal, as $|w'|=9 > 4+4=K_{w'}+R_{w'}$.
\end{example}

The following theorem is essentially a restatement of~\cite[Theorem~5]{daless}.
It characterizes non-Sturmian trapezoidal words as products of two periodic
extensions of the elements in a pathological pair.
\begin{theorem}
\label{thm:daless}
A word $w\in A^*$ is trapezoidal non-Sturmian if and only if it can be written as
\[w=pxux\cdot yuyq\]
where $u$ is a central word, $A=\{x,y\}$, $\pi_{pxux}=\pi_{ux}$, and
$\pi_{yuyq}=\pi_{yu}$.
Furthermore, $R_w=|pxux|$ and $K_w=|yuyq|$.
\end{theorem}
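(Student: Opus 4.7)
The plan is to derive the statement from \cite[Theorem~5]{daless} by translating its decomposition into the present notation, using the fact --- established in \cite{delumig} --- that minimal pathological pairs are governed by central words. For the forward direction, assume $w$ is non-Sturmian and trapezoidal. The balance characterization recalled above provides $v\in A^*$ with $xvx$ and $yvy$ both factors of $w$ for some $\{x,y\}=A$; among all such $v$, choose one of minimal length and call it $u$. By the result of \cite{delumig}, such a minimal $u$ is necessarily central. After possibly swapping $x$ and $y$, let $pxux$ be the prefix of $w$ ending at the \emph{first} occurrence of $xux$ in $w$, and let $yuyq$ be the suffix beginning at the \emph{last} occurrence of $yuy$ in $w$.

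The first key step is to show $w = pxux\cdot yuyq$. If these two parts overlapped, the overlap would force $xux$ and $yuy$ to interlock, producing --- via standard combinatorial arguments on central words --- a strictly shorter pathological pair, contradicting minimality of $|u|$. If they left a gap, then $|w|>|pxux|+|yuyq|$ and one checks via the general inequality $|w|\ge K_w+R_w$ recalled in the introduction that this is incompatible with the trapezoidality identity $|w|=K_w+R_w$. The period conditions $\pi_{pxux}=\pi_{ux}$ and $\pi_{yuyq}=\pi_{yu}$ then follow from the choice of first and last occurrence: were $\pi_{pxux}>\pi_{ux}$, propagating the period of $ux$ backwards through $pxux$ would produce an earlier occurrence of $xux$ inside $pxux$, contradicting ``first occurrence''; the symmetric argument handles $yuyq$.

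To close the forward direction and the ``Furthermore'' clause, one identifies $R_w$ and $K_w$. Centrality of $u$ makes it bispecial in the ambient Sturmian language, and combined with $\pi_{pxux}=\pi_{ux}$ this shows that $pxu$ is a right special factor of $w$, while no longer right special factor can exist (any candidate would have to straddle the junction and conflict with the period condition $\pi_{yuyq}=\pi_{yu}$), yielding $R_w=|pxu|+1=|pxux|$; the dual argument gives $K_w=|yuyq|$. The converse direction is the easy one: any $w$ of the stated form contains the pathological pair $(xux,yuy)$ and is therefore non-Sturmian, and the same computations give $R_w+K_w=|pxux|+|yuyq|=|w|$, so $w$ is trapezoidal. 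The main obstacle will be the simultaneous bookkeeping at the junction between the two halves: one must ensure that the two period conditions propagate in opposite directions without colliding in the interior, so that neither a longer right special factor nor a shorter unrepeated suffix can appear --- this is precisely where the centrality of $u$ and the minimality of the chosen pathological pair do the decisive work.
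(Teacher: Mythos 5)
The paper does not actually prove this statement: it is explicitly presented as a restatement of \cite[Theorem~5]{daless}, so there is no internal proof to compare against, and your opening sentence (translate that theorem into the present notation) is in fact all the paper does. The from-scratch argument you then sketch has a concrete flaw. Defining $pxux$ as the prefix ending at the \emph{first} occurrence of $xux$, and $yuyq$ as the suffix starting at the \emph{last} occurrence of $yuy$, is the wrong choice of occurrences: the period condition $\pi_{pxux}=\pi_{ux}$ forces $xux$ to occur repeatedly inside $pxux$ (every backward shift by $|u|+1$ gives another occurrence), so the first occurrence generally ends strictly before position $|pxux|$, and dually for $yuy$. Already for $w=aaabb$ (trapezoidal and non-Sturmian, with $u=\varepsilon$, $x=a$, $y=b$, $R_w=3$, $K_w=2$, correct factorization $aaa\cdot bb$) your recipe produces the prefix $aa$ and the suffix $bb$, which leave a gap at the middle letter; yet no contradiction with $|w|=K_w+R_w$ arises, so the ``gap'' branch of your case analysis, which is supposed to exclude exactly this, fails. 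The occurrences you want are the \emph{last} occurrence of $xux$ and the \emph{first} occurrence of $yuy$ --- or, more robustly, one should split $w$ at position $R_w=|w|-K_w$ and then prove the two halves have the stated periodic structure. The same confusion undermines your derivation of $\pi_{pxux}=\pi_{ux}$ from ``first occurrence''.

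Beyond this, every step carrying real content is deferred: that the two occurrences can neither overlap nor leave a gap, that $pxu$ is right special in $w$ (which requires exhibiting an occurrence of $pxuy$, not at all obvious), that no right special factor of length $|pxux|$ exists, and that $uyq$ is a repeated suffix of $w$ while $yuyq$ is not, are all waved through as ``standard combinatorial arguments'' or ``one checks''. These are precisely the substance of D'Alessandro's proof, so as written the proposal is a plan rather than a proof. If the goal is, as in the paper, merely to import \cite[Theorem~5]{daless}, the honest route is to state that theorem and give the dictionary between its formulation and the present one, rather than re-derive it with the occurrence bookkeeping inverted.
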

% It should be noted that in~\cite{daless}, it is also specified that $(xux,yuy)$
% is the \emph{shortest} pathological pair in $w$. However, this can be viewed as
% a consequence, since for a word
Note that for such a word, $\{xux,yuy\}$ is actually the \emph{shortest}
pathological pair (cf.~\cite{trapezo}). %%TODO: explain

\begin{example}
The trapezoidal word $w=aaababa$ considered in Example~\ref{ex:pex} can be
written as $aaa\cdot baba$, where $aaa$ and $baba$ are periodic extensions
(to the left and to the right, respectively) of the elements of the pathological
pair $\{aaa,bab\}$.

The non-trapezoidal word $w'=abaaababa$ does not verify the condition in
Theorem~\ref{thm:daless}. Indeed, its only pathological pair is $(aaa,bab)$,
and writing $w'=abaaa\cdot baba$ we obtain $\pi_{abaaa}=4\neq 1=\pi_{aa}$.
\end{example}

\section{Closed and Open Trapezoidal Words}
\label{sec:OC}
A finite, nonempty word $w$ is said to be \emph{closed} (or periodic-like in
earlier works) if it has a border $u$ with no internal occurrences, that is, a
factor occurring exclusively as a prefix and as a suffix; another common
terminology for describing this situation is that $w$ is a \emph{complete
(first) return} to $u$. In particular, single letters are closed, their border
being the empty word.

A non-closed word is said to be \emph{open}. Equivalently, $w$ is open if
and only if its longest repeated prefix (resp.~suffix) is a right (resp.~left)
special factor (cf.~\cite{delufin}).

The following result was proved in~\cite[Proposition~3.6]{richper}.
\begin{proposition}
\label{thm:ctrapSt}
All closed trapezoidal words are Sturmian.
\end{proposition}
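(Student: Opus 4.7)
I will argue by contradiction: suppose $w$ is closed, trapezoidal, and non-Sturmian, and derive a contradiction using the structural characterization from Theorem~\ref{thm:daless}. The plan is to combine one general fact about closed words, the explicit factorization of non-Sturmian trapezoidal words, and a Fine--Wilf argument. The step I expect to require the most care is isolating the closed-word fact and extracting from it the clean identity $L_w = R_w$; once that is in place the rest runs almost mechanically.

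The first step is to show that, for a closed word $w$, the longest border of $w$ coincides with both its longest repeated prefix and its longest repeated suffix. If some repeated prefix $r$ were strictly longer than the longest border $b$, then the second occurrence of $r$ would force its prefix $b$ to appear at a position that is neither the prefix nor the suffix position of $w$, contradicting the defining property of $b$ as a border with no internal occurrence; a symmetric argument handles the suffix side. Rephrased via the characteristic parameters (using $H_w+L_w=K_w+R_w=|w|$ for trapezoidal $w$), this yields $L_w = R_w$ for every closed trapezoidal word.

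Next I will apply Theorem~\ref{thm:daless} to write $w = pxux \cdot yuyq$ with $R_w = |pxux|$ and $K_w = |yuyq|$. Reversing $w$ and invoking the same theorem on $\tilde w$ (which, since $u$ is a palindrome, factors as $\tilde q \cdot yuy \cdot xux \cdot \tilde p$ with the roles of $x,y$ and $p,q$ swapped, and the period conditions preserved under reversal) will give the symmetric identities $H_w = |pxux|$ and $L_w = |yuyq|$. Combined with $L_w = R_w$, this forces $|pxux| = |yuyq|$, and the border of $w$ of length $R_w$ must be simultaneously the prefix $pxux$ and the suffix $yuyq$; hence $pxux = yuyq$ as words.

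Finally, this common word has two periods, $\pi_{ux}$ and $\pi_{yu} = \pi_{uy}$, and its length is at least $|xux| = |u|+2 = \pi_{ux} + \pi_{uy}$ by the central-word period identity. Since $\pi_{ux}$ and $\pi_{uy}$ are coprime for central $u$, Fine--Wilf forces this word to be a power of a single letter; but $pxux$ must contain the letter $x$, while $yuyq$ starts with $y$, which is a contradiction.
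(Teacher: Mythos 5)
The paper gives no proof of this proposition: it simply cites \cite{richper}. Your argument is therefore a genuinely different, self-contained route, and its overall strategy is sound. You combine three ingredients: (i) for a closed word the longest border, the longest repeated prefix and the longest repeated suffix all coincide, which gives $H_w=K_w$ and hence $L_w=R_w$ for a closed trapezoidal word; (ii) Theorem~\ref{thm:daless} applied to $w$ and to its reversal $\tilde{w}$ (legitimate, since central words are palindromes and reversal preserves minimal periods, trapezoidality and Sturmianity, and swaps $(H,L)$ with $(K,R)$), which yields $H_w=|pxux|=R_w$ and $L_w=|yuyq|=K_w$, hence $|pxux|=|yuyq|=|w|/2$; (iii) a Fine--Wilf argument on the resulting long border. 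What this buys, compared to the paper's citation, is an explicit proof resting only on Theorem~\ref{thm:daless} and standard facts about central words (the period identity $|u|=\pi_{ua}+\pi_{ub}-2$ and the coprimality of $\pi_{ua},\pi_{ub}$, the latter of which you use but the paper never states).

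There is, however, one incorrect step that you must repair. You assert that $w$ has a border of length $R_w$, so that $pxux=yuyq$ as words. It does not: for a closed word the longest border \emph{is} the longest repeated prefix, so its length is $H_w-1=R_w-1=|w|/2-1$, one less than $|pxux|$. (Indeed, a border of length $|w|/2$ would force $H_w\geq |w|/2+1$.) The correct identity is obtained by dropping the last letter of the prefix and the first letter of the suffix: $pxu=uyq$. Fortunately your final step survives this correction verbatim: the word $v=pxu=uyq$ inherits the period $\pi_{ux}$ from $pxux$ and the period $\pi_{yu}=\pi_{uy}$ from $yuyq$, has length $|p|+|u|+2\geq |u|+2=\pi_{ux}+\pi_{uy}$, so by Fine and Wilf and coprimality it is a power of a single letter; yet it contains $x$ (visible in $pxu$) and $y$ (visible in $uyq$), a contradiction. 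So the proof is correct once the border identity is stated at the right length.
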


The following result, showing a basic connection between the behavior of $H$
and the property of being closed or open, is essentially known
(see~\cite[Lemma~6 and Remark~8]{OCtcs17}). We report a proof for the sake of
completeness.
\begin{lemma}
\label{thm:H}
Let $w\in A^{*}$ and $x\in A$. Then $H_{wx}=H_{w}+1$ if $wx$ is closed, and
$H_{wx}=H_{w}$ if $wx$ is open.
\end{lemma}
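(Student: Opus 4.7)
I would work with two prefixes of $w$: the shortest unrepeated prefix $u$ (of length $H_w$) and the longest repeated prefix $p$ (of length $H_w-1$). Two preliminary observations drive the argument. First, $H_{wx}\ge H_w$ since all occurrences of $p$ in $w$ persist in $wx$, so $p$ remains a repeated prefix of $wx$. Second, any factor that is unrepeated in $w$ can gain new occurrences in $wx$ only as a suffix of $wx$, because every genuinely new occurrence must involve the appended letter $x$.

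For the open case, I would show that $u$ remains unrepeated in $wx$. If it did not, the second observation would force $u$ to be a suffix, hence a border, of $wx$. Its occurrences in $wx$ would be exactly the prefix (inherited from $w$) and the new suffix, so $u$ would be a border with no internal occurrences, contradicting that $wx$ is open. Hence $H_{wx}\le|u|=H_w$, and combined with the lower bound this gives $H_{wx}=H_w$.

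For the closed case, let $v$ be a border of $wx$ with no internal occurrences. My first step would be to rule out $|v|=|p|$: this would force $v=p$, but the repeated occurrence of $p$ inside $w$ sits at some position $j$ with $1\le j\le |w|-|p|=|wx|-|p|-1$, strictly before the suffix position $|wx|-|p|$, and would therefore be an internal occurrence of $v$, a contradiction. So $|v|\ge|p|+1$. I would then examine the prefix $\tilde p$ of $wx$ of length $|p|+1$: by maximality of $p$ it is unrepeated in $w$, while being a prefix of $v$ it inherits $v$'s two occurrences and is thus repeated in $wx$; by the second observation its extra occurrence in $wx$ must be the suffix. Hence $\tilde p$ is itself a border of $wx$ having only the prefix and suffix as occurrences. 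A brief uniqueness argument, noting that if $v_1$ and $v_2$ were two nested borders with no internal occurrences then the starting position of the longer one's suffix would produce an internal occurrence of the shorter one, yields $\tilde p=v$; therefore $|v|=|p|+1$ and $H_{wx}=H_w+1$.

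The delicate part is the bookkeeping of which occurrences in $wx$ are inherited from $w$ and which are genuinely new, together with the uniqueness of the no-internal-occurrence border that underpins the closed case. The trivial case $w=\varepsilon$ is handled directly: then $wx$ is a single letter, which is closed, and $H_{wx}=1=H_\varepsilon+1$.
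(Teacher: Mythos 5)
Your proof is correct and takes essentially the same route as the paper's: both arguments revolve around the longest repeated prefix $p$ and the shortest unrepeated prefix $\tilde p$ of $w$, showing in the closed case that the internal-occurrence-free border of $wx$ has length exactly $|p|+1$, and in the open case that $\tilde p$ stays unrepeated in $wx$. One small write-up nit: after ruling out $|v|=|p|$ you conclude $|v|\ge|p|+1$, which also requires excluding $|v|<|p|$ --- but the identical argument covers that case, since such a $v$ would be a prefix of $p$ and would inherit an internal occurrence of $wx$ from the repeated occurrence of $p$ inside $w$.
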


\begin{proof}
Trivial if $w=\varepsilon$. Let then $w$ be nonempty, and $hy$ ($y\in A$)
be its shortest unrepeated prefix, so that $H_w=|hy|$.
If $wx$ is closed, then its longest border has to be longer than $h$ since
$h$ has internal occurrences in $wx$, and not longer than $hy$ since
otherwise $hy$ would reoccur in $w$. Hence $x=y$ and $H_{wx}=|hy|+1=H_w+1$
as desired.
If $wx$ is open, then $hy$ cannot have internal occurrences in $wx$, since it
is unrepeated in $w$, and it cannot be a suffix either, otherwise $wx$ would
be closed. Hence $hy$ is unrepeated in $wx$, i.e., $H_{wx}=H_w$.
\qed
\end{proof}

\begin{lemma}
\label{thm:L}
Let $wx$ be a trapezoidal word, $x\in A$. Then $L_{wx}=L_{w}$ if $wx$ is closed,
and $L_{wx}=L_{w}+1$ if $wx$ is open.
\end{lemma}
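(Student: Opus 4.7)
The plan is to combine Lemma~\ref{thm:H} with a general upper bound $L_{wx}\leq L_w+1$ and the trapezoidal identity $|wx|=H_{wx}+L_{wx}$. By Lemma~\ref{thm:H} we have $H_{wx}=H_w+1$ when $wx$ is closed and $H_{wx}=H_w$ otherwise, so in either case the identity yields $L_{wx}=|w|+1-H_{wx}$. Combined with the general inequality $|w|\geq H_w+L_w$, this already gives $L_{wx}\geq L_w$ in the closed case and $L_{wx}\geq L_w+1$ in the open case; the task is then to prove the matching upper bounds.

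The first step, independent of trapezoidality, is to establish $L_{wx}\leq L_w+1$. Two observations drive the argument: (i) prefixes of left special factors are themselves left special, since they inherit two distinct left extensions; (ii) any left special factor of $wx$ that is not left special in $w$ must be a suffix of $wx$, because at least one of its left extensions in $wx$ must end at position $|wx|$. Assume for contradiction $L_{wx}\geq L_w+2$. Then by the definition of $L_w$ there are new left special factors of $wx$ at both lengths $L_w$ and $L_w+1$, and by (ii) these are the corresponding suffixes of $wx$. Combining this with (i) applied to the longer factor forces the last $L_w+1$ letters of $wx$ to be a constant block $c^{L_w+1}$ for some $c\in A$. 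Left-specialness of $c^{L_w+1}$ in $wx$ then implies that $c^{L_w+2}$ is also a factor of $wx$; a short case distinction on the letter immediately preceding the final $c^{L_w+1}$ block places the factor $c^{L_w+1}$ inside $w$, and the factor $\overline{c}\,c^{L_w}$ (where $\overline{c}$ denotes the other letter) is already forced to lie inside $w$ since it cannot be a suffix of $wx$. These two occurrences exhibit both letters as left extensions of $c^{L_w}$ in $w$, so $c^{L_w}$ is a left special factor of $w$ of length $L_w$, contradicting the definition of $L_w$.

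Once $L_{wx}\leq L_w+1$ is available, the open case is immediate: the two matching bounds give $L_{wx}=L_w+1$, and as a by-product $w$ is trapezoidal. For the closed case the general bounds only pin $L_{wx}$ down to $\{L_w,L_w+1\}$, and ruling out the larger value requires showing that $w$ itself is trapezoidal. For this I would invoke Proposition~\ref{thm:ctrapSt}: since $wx$ is closed and trapezoidal, it is Sturmian; every prefix of a finite Sturmian word is a factor of the same Sturmian infinite word, hence finite Sturmian, hence trapezoidal. Therefore $|w|=H_w+L_w$ and $L_{wx}=|w|-H_w=L_w$, as required.

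The main obstacle is the auxiliary inequality $L_{wx}\leq L_w+1$; the ingredients are elementary, but the case split needed to relocate the factor $c^{L_w+1}$ inside $w$ has to be performed with some care.
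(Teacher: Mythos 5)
Your proof is correct, but it takes a genuinely different route from the paper's. The paper's argument is a one-liner: it applies the trapezoidal identity to \emph{both} $w$ and $wx$ (writing $|w|=H_w+L_w$ and $|wx|=H_{wx}+L_{wx}$) and subtracts, which implicitly relies on the known fact that prefixes of trapezoidal words are again trapezoidal. You instead only use the identity for $wx$, and compensate with two extra ingredients: a self-contained proof of the general inequality $L_{wx}\leq L_w+1$ (valid for arbitrary words, with no trapezoidal hypothesis), and, in the closed case, a detour through Proposition~\ref{thm:ctrapSt} plus the factoriality of finite Sturmian words to conclude that $w$ is trapezoidal. Your combinatorial argument for $L_{wx}\leq L_w+1$ is sound: the reduction of the new left special factors to suffixes of $wx$, the period-$1$ conclusion $c^{L_w+1}$, and the relocation of both left extensions of $c^{L_w}$ into $w$ all check out (including the degenerate case $L_w=0$); this inequality is essentially among the extension properties of $H$ and $L$ established in~\cite{delufin}, so you have reproved a known lemma rather than discovered a gap-free shortcut. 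What your approach buys is transparency about exactly which closure property is being used where: the open case needs nothing beyond $|w|\geq H_w+L_w$ and your inequality (and even yields $|w|=H_w+L_w$ as a by-product), while the closed case still needs \emph{some} heredity statement --- you trade factoriality of the trapezoidal class for the more elementary factoriality of the Sturmian class. The cost is length: the paper's proof is two lines once the heredity of trapezoidality is taken as known.
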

\begin{proof}
Follows from Lemma~\ref{thm:H} as $|w|=H_{w}+L_{w}$ and $|wx|=H_{wx}+L_{wx}$.
\qed
\end{proof}

Clearly, the following symmetric statement holds for \emph{left} extensions $xw$.
\begin{lemma}
\label{thm:K}
Let $w\in A^{*}$ and $x\in A$. Then $K_{xw}=K_{w}+1$ if $xw$ is closed, and
$K_{xw}=K_{w}$ if $xw$ is open. Moreover, if $xw$ is trapezoidal,
then $R_{xw}=R_w$ if $xw$ is closed, and $R_{xw}=R_w+1$ otherwise.
\end{lemma}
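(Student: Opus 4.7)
The plan is to exploit the left-right symmetry between $(H,L)$ and $(K,R)$ and reduce everything to Lemmas \ref{thm:H} and \ref{thm:L}. Writing $\tilde{u}$ for the reversal of a word $u$, one has $K_u = H_{\tilde{u}}$ and $R_u = L_{\tilde{u}}$, because reversal swaps prefixes with suffixes and right special factors with left special ones. Moreover, closedness is reversal-invariant, since the borders of $u$ are precisely the reversals of the borders of $\tilde{u}$, and a border has internal occurrences in $u$ iff its reversal has internal occurrences in $\tilde{u}$. The trapezoidal property is reversal-invariant as well, via the identity $|u|=K_u+R_u=H_{\tilde{u}}+L_{\tilde{u}}$.

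For the first statement, I would apply Lemma~\ref{thm:H} to the word $\tilde{w}$ and the letter $x$. Since $\widetilde{xw}=\tilde{w}x$, the hypothesis ``$xw$ closed'' is equivalent to ``$\tilde{w}x$ closed,'' and in that case Lemma~\ref{thm:H} yields $H_{\tilde{w}x}=H_{\tilde{w}}+1$, which translates into $K_{xw}=K_w+1$. The open case is identical, giving $K_{xw}=K_w$.

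For the second statement, assume $xw$ is trapezoidal. Then $w$ is a suffix of a trapezoidal word and is therefore itself trapezoidal (the reversal of the ``prefixes of trapezoidal words are trapezoidal'' fact that is implicitly used in the proof of Lemma~\ref{thm:L}). Hence $|w|=K_w+R_w$ and $|xw|=K_{xw}+R_{xw}$, so
\[
(K_{xw}-K_w) + (R_{xw}-R_w) = 1.
\]
Combining with the first part, when $xw$ is closed the $K$-difference absorbs the $+1$ and $R_{xw}=R_w$; when $xw$ is open, $K$ does not change and $R_{xw}=R_w+1$.

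The step I expect to require the most care is justifying the inheritance used in the second part, namely that $w$ itself is trapezoidal whenever $xw$ is; this is the mirror image of the unstated fact used in Lemma~\ref{thm:L}, and once invoked the rest is just arithmetic with the length identity.
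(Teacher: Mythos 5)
Your argument is correct and is exactly the route the paper takes: Lemma~\ref{thm:K} is stated there without proof as the ``symmetric statement'' of Lemmas~\ref{thm:H} and~\ref{thm:L}, which is precisely the reversal argument you spell out. The inheritance fact you flag (suffixes, and more generally factors, of trapezoidal words are trapezoidal) is known from the cited literature and is the same fact the paper uses implicitly in the proof of Lemma~\ref{thm:L}.
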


For a trapezoidal word $w$, the equality $\{H_w,L_w\}=\{K_w,R_w\}$ holds
(cf.~\cite{delufin}).
The following theorem, proved in~\cite[Proposition~4.4]{trapezo}, is more
precise.
\begin{theorem}
\label{thm:openH=R}
Let $w$ be a trapezoidal word. Then $H_{w}=K_{w}$ and $L_{w}=R_{w}$ if $w$ is
closed, whereas $H_{w}=R_{w}$ and $L_{w}=K_{w}$ if $w$ is open.
\end{theorem}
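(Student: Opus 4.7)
My plan is to treat the closed and open cases of the statement separately. For the closed case, I would first establish the stronger identity $H_w=K_w=|w|-\pi_w+1$, valid for any nonempty closed word. The longest border $u$ of $w$ has length $|w|-\pi_w$ and, by closedness, occurs exactly twice in $w$---as prefix and as suffix, with no internal occurrence. Every shorter prefix of $w$ is a prefix of $u$ and is therefore also repeated, while the prefix of length $|u|+1$ cannot reoccur: such a reoccurrence would have to start at the suffix position of $u$ and would extend past $w$. Hence $H_w=|u|+1$, and the mirror argument gives $K_w=|u|+1$, so $H_w=K_w$; the trapezoidal identity $|w|=H_w+L_w=K_w+R_w$ then gives $L_w=R_w$.

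For the open case, I would aim to show $H_w=R_w$; the identity $L_w=K_w$ then follows from the sum identity. Using the characterization of open words recalled at the start of Section~\ref{sec:OC}, the longest repeated prefix of $w$ is a right special factor, so $R_w\geq H_w$; by the reversal-symmetric argument $L_w\geq K_w$; and $|w|=H_w+L_w=K_w+R_w$ yields $R_w-H_w=L_w-K_w\geq 0$. It therefore suffices to prove $R_w\leq H_w$.

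Suppose, for contradiction, that the longest right special factor $r$ of $w$ satisfies $|r|=R_w-1\geq H_w$. Being right special, $r$ is repeated in $w$; but prefixes of length $\geq H_w$ are unrepeated, so $r$ cannot be a prefix. If all occurrences of $r$ were preceded by the same letter $c$, then $cr$ would itself be right special (same right extensions), contradicting the maximality of $|r|$; hence $r$ is preceded by both $a$ and $b$, making it also left special, i.e.\ bispecial. A case analysis on which of $ara,arb,bra,brb$ are factors of $w$ then shows that either $ar$ or $br$ is right special (another maximality contradiction), or $\{ara,brb\}$ is a pathological pair and $w$ is non-Sturmian. In the latter branch Theorem~\ref{thm:daless} writes $w=pxux\cdot yuyq$ with $R_w=|pxux|$ and $K_w=|yuyq|$; since the central word $u$ is a palindrome, the reversal of $w$ also admits $\{xux,yuy\}$ as pathological pair, and applying Theorem~\ref{thm:daless} to that reversal, together with the fact that $H$ (resp.\ $L$) of $w$ equals $K$ (resp.\ $R$) of its reversal, yields $H_w=|pxux|=R_w$, contradicting $R_w>H_w$.

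The main obstacle is the case analysis around the bispecial $r$: one must also rule out the ``mixed'' configuration in which $\{arb,bra\}$ are factors of $w$ but $\{ara,brb\}$ are not, which neither produces a pathological pair nor an immediate longer right special factor. Here I would invoke the uniqueness, in a trapezoidal word, of the right (and left) special factor of each length---a direct consequence of the trapezoidal factor-complexity property---together with the palindromicity of bispecial factors of Sturmian words, to force $r$ to coincide with a prefix of $w$ and thereby contradict $|r|\geq H_w$.
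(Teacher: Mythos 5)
Note first that the paper does not actually prove this statement: it is quoted as Proposition~4.4 of~\cite{trapezo}, so there is no in-paper argument to compare yours against. On its own merits, your closed case is correct and complete: the border of a closed word with no internal occurrences is necessarily the longest border, your argument that $H_w=K_w=|w|-\pi_w+1$ is sound, and $L_w=R_w$ then follows from $|w|=H_w+L_w=K_w+R_w$. The first half of your open case is also fine ($R_w\geq H_w$, $L_w\geq K_w$, and the two differences are equal), as is the reduction showing that the longest right special factor $r$, if $|r|\geq H_w$, must be bispecial with only the extensions $\{ara,brb\}$ or $\{arb,bra\}$ realized; and the pathological-pair branch is correctly closed by applying Theorem~\ref{thm:daless} to $w$ and to its reversal.

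The genuine gap is exactly where you flag it: the mixed configuration $\{arb,bra\}$. The strategy you propose for it does not go through as described. You aim to ``force $r$ to coincide with a prefix of $w$,'' but you have already proved that $r$ cannot be a prefix (it is repeated, while prefixes of length $\geq H_w$ are not), so this is not a small verification --- it is the entire remaining content of the theorem, and neither of your two tools delivers it. Uniqueness of the right (and left) special factor of each length in a trapezoidal word tells you that the special factors form chains (suffixes of $r$, prefixes of the longest left special factor), but special factors of a finite word need not occur as prefixes or suffixes of it (e.g.\ in $baab$ the longest left special factor is $a$), so uniqueness does not locate $r$ at the beginning of $w$. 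Palindromicity of bispecial factors applies only when $w$ is Sturmian, and even then it yields that $r$ is central, not that it is a prefix of $w$. Moreover, in the mixed configuration $w$ may still be non-Sturmian (via a pathological pair not involving $r$), a sub-case your sketch does not address at all; that part, at least, you could repair by restructuring the proof to dispatch all non-Sturmian open trapezoidal words up front via Theorem~\ref{thm:daless} and reversal. What remains --- an open Sturmian trapezoidal word whose longest right special factor is bispecial with only the crossed extensions and has length $\geq H_w$ --- is the heart of Proposition~4.4 of~\cite{trapezo}, and your proposal does not contain an argument that rules it out.
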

\begin{corollary}
\label{thm:RKtrap}
Let $wx$ be a trapezoidal word and $x\in A$. Then $K_{wx}=K_{w}+1$ and
$R_{wx}=R_{w}$, unless
\begin{itemize}
\item $w$ is closed and $wx$ is open, or
\item $w$ is open and $wx$ is closed,
\end{itemize}
in which cases we have $K_{wx}=R_{w}+1$ and $R_{wx}=K_{w}$ instead.
\end{corollary}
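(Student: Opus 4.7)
The plan is to reduce everything to $H$ and $L$ via Theorem~\ref{thm:openH=R}, apply Lemmas~\ref{thm:H} and~\ref{thm:L} to track how these parameters change under the extension $w\mapsto wx$, and then translate back to $K$ and $R$. The key preliminary observation, which I expect to be the main subtle point, is that $w$ itself must be trapezoidal; without this, Theorem~\ref{thm:openH=R} could not be invoked on the left-hand side in order to replace $H_w,L_w$ by $K_w,R_w$.

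To see that $w$ is trapezoidal, I would note that $wx$ is either closed or open, but not both. By Lemma~\ref{thm:H}, $H_{wx}-H_w$ equals $1$ or $0$ according as $wx$ is closed or open; by Lemma~\ref{thm:L}, which uses the hypothesis that $wx$ is trapezoidal, $L_{wx}-L_w$ equals $0$ or $1$ in the same two cases. In either case the sum increases by exactly one, giving $H_{wx}+L_{wx}=H_w+L_w+1$. Since $|wx|=H_{wx}+L_{wx}$ and $|wx|=|w|+1$, this forces $|w|=H_w+L_w$, so $w$ is trapezoidal.

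Once $w$ is known to be trapezoidal, I would split into the four cases according to whether each of $w$ and $wx$ is open or closed. Theorem~\ref{thm:openH=R} tells us how to identify $\{K,R\}$ with $\{H,L\}$ on each side, with the pairing governed by closedness. When $w$ and $wx$ share the same open/closed status, the pairings agree on both sides, and the single unit of increment supplied by Lemmas~\ref{thm:H} and~\ref{thm:L} attaches to $K$, giving $K_{wx}=K_w+1$ and $R_{wx}=R_w$. When the two statuses differ, the $\{K,R\}$-to-$\{H,L\}$ pairing is swapped on one side relative to the other, so the same unit of increment now lands on $R_w$, yielding $K_{wx}=R_w+1$ and $R_{wx}=K_w$. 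This is exactly the dichotomy in the statement, and verifying each of the four subcases is a direct substitution.
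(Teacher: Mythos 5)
Your proof is correct and follows essentially the same route as the paper, which simply cites Lemmas~\ref{thm:H} and~\ref{thm:L} together with Theorem~\ref{thm:openH=R} and leaves the four-case bookkeeping implicit. Your preliminary verification that $w$ is itself trapezoidal (needed to apply Theorem~\ref{thm:openH=R} to $w$) is a detail the paper glosses over, and your derivation of it from $H_{wx}+L_{wx}=H_w+L_w+1$ is a clean way to justify it.
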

\begin{proof}
Consequence of Lemmas~\ref{thm:H},~\ref{thm:L}, and
Theorem~\ref{thm:openH=R}.
\end{proof}

\begin{proposition}
Let $w\in A^*$, $y\in A$. If $wy$ is trapezoidal but not Sturmian, then $w$
is open.
\end{proposition}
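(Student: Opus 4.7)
The plan is to first show that $wy$ itself is open, then deduce that $w$ is trapezoidal, and finally exhibit a right-special longest repeated prefix of $w$, concluding via the open-word characterization recalled at the start of Section~\ref{sec:OC}.

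Since $wy$ is trapezoidal and not Sturmian, Proposition~\ref{thm:ctrapSt} (taken contrapositively) forces $wy$ to be open. Combining Lemmas~\ref{thm:H} and~\ref{thm:L} on the open extension $w\to wy$ with $|wy|=H_{wy}+L_{wy}$ yields $|w|=H_w+L_w$, so $w$ is also trapezoidal; Theorem~\ref{thm:openH=R} applied to the open word $wy$ additionally gives $H_w=H_{wy}=R_{wy}$.

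Invoking Theorem~\ref{thm:daless}, I write $wy=pxux\cdot zuzq$ with $u$ central, $\{x,z\}=A$, $\pi_{pxux}=\pi_{ux}=:\pi$, and $R_{wy}=|pxux|$, having renamed the theorem's $y$ as $z$ to avoid clashing with the proposition's appended letter. Hence the longest repeated prefix of $w$ is the prefix $pxu$, of length $H_w-1$. It is followed by $x$ in the prefix occurrence, so to conclude openness of $w$ it suffices to find another occurrence of $pxu$ in $w$ that is followed by $z$.

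The main obstacle is producing this second occurrence. I claim it sits at position $\pi$ in $w$ and is followed by $z$ there. Matching the portion of the putative match that stays inside $pxux$ is immediate from $\pi$ being a period of $pxux$. For the portion extending into the $zuz$-part of $wy$, and for the followed-by-$z$ check, the letter-by-letter identities reduce (after a palindromic reindexing) to $u[\pi-2]=z$ and $u[m-1]=u[\pi-2-m]$ for $1\le m\le\pi-2$. These follow from palindromicity of the central word $u$ combined with the fact that $u$ itself has period $\pi':=|u|+2-\pi$, which is provided by the characterization $|u|=\pi_{ua}+\pi_{ub}-2$ recalled in the preliminaries together with $\pi_{zu}=\pi'$: concretely, the period condition on $uz$ forces $u[\pi'-1]=z$, and palindromicity then gives $u[\pi-2]=z$, while the shift $u[j]=u[j+\pi']$ paired with palindromicity yields $u[m-1]=u[\pi-2-m]$. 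The degenerate boundary values $\pi=1$ (where $pxux=x^{|pxux|}$) and $\pi=|u|+1$ (where $u=z^{|u|}$) are handled separately but trivially. Once this is verified, $pxu$ is right special in $w$, and the open-word characterization concludes that $w$ is open.
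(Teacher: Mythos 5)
Your proof is correct, but it takes a genuinely different route from the paper's. The paper argues by contradiction: after disposing of the case where $w$ is non-Sturmian via Proposition~\ref{thm:ctrapSt}, it assumes $w$ is Sturmian and closed, observes that the decomposition of $wy$ from Theorem~\ref{thm:daless} must then have $q=\varepsilon$ (so the decomposition's second letter coincides with the appended $y$ and $w=pxuxyu$), derives $H_w=|pxux|$ exactly as you do, and then uses the fact that for a closed word the longest repeated prefix is the longest border: $pxu$ would have to be a suffix of $w$, which is impossible since $pxu$ ends in $xu$ while $w$ ends in $yu$ and $x\neq y$. Your argument is instead direct and constructive: you keep the decomposition letter $z$ distinct from the appended $y$, avoid any case split on whether $w$ is Sturmian, and exhibit the longest repeated prefix $pxu$ as a right special factor of $w$ by locating a second occurrence at offset $\pi=\pi_{ux}$, followed there by $z$. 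The index computations you sketch do check out: the identities $u[\pi-2]=z$ and $u[m-1]=u[\pi-2-m]$ follow from palindromicity of $u$ together with $\pi_{ux}+\pi_{uz}=|u|+2$, and the boundary cases $\pi=1$ and $\pi=|u|+1$ are indeed trivial. What the paper's route buys is brevity---one structural observation about borders of closed words replaces your letter-by-letter verification; what yours buys is uniformity (no Sturmian/non-Sturmian dichotomy, no need to establish $q=\varepsilon$) and extra information, namely an explicit witness for the right specialness of the longest repeated prefix. One small presentational point: the trapezoidality of $w$, which you derive from Lemmas~\ref{thm:H} and~\ref{thm:L}, is never used afterwards (the open-word characterization via the longest repeated prefix needs no trapezoidality), so that step can be dropped.
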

\begin{proof}
If $w$ is not Sturmian, by Proposition~\ref{thm:ctrapSt} we are done.
Let then $w$ be Sturmian and assume it is closed, by contradiction.
By Proposition~\ref{thm:ctrapSt}, $wy$ is open. Writing
$wy=pxux\cdot yuyq$ as in Theorem~\ref{thm:daless}, we have $q=\varepsilon$
as $w$ is Sturmian, and $H_w=H_{wy}=R_{wy}=|pxux|$ by Lemma~\ref{thm:H} and
%$K_{wy}=|yuy|=L_{wy}=L_w+1\leq K_w=H_w=H_{wy}=|pxux|$ by
Theorems~\ref{thm:openH=R} and~\ref{thm:daless}. It follows that $pxu$ is 
the longest border of $w$. As $x\neq y$ and $w$ ends in $yu$, this is clearly
absurd. \qed
\end{proof}

Let $w_{[n]}$ denote the prefix of $w$ of length $n$.
The \emph{oc-sequence} of a word $w$ is the characteristic sequence of
its closed prefixes. In other terms, it is the binary word $OC_w$ such that
\[OC_w(n)=
\begin{cases}
1\quad\text{ if $w_{[n]}$ is closed,}\\
0\quad\text{ if $w_{[n]}$ is open.}
\end{cases}
\]
The oc-sequence is a useful tool in studying the structure of finite and
infinite words. For example, in~\cite{OCtcs17}, the following was proved:
\begin{theorem}
\label{thm:1k0k}
Let $w$ be an infinite word, and let
\[\OC_w=\prod_{n=0}^{\infty}1^{k_n}0^{k_n'}\]
for suitable positive integers $k_n,k_n'$, with $n\geq 0$.
Then $k_n\leq k_n'$ for all $n\geq 0$, with equality holding for all $n$ if
and only if $w$ is a characteristic Sturmian word.
\end{theorem}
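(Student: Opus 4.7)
The plan is to use Lemma \ref{thm:H} as the main bookkeeping tool: $H_{w_{[m]}}$ equals the number of closed prefixes of $w$ of length at most $m$, so $\OC_w$ precisely reads off the increments of $H$. During the $n$th $1$-run $H$ grows by $k_n$ and during the $n$th $0$-run $H$ stays constant, so the inequality $k_n \le k_n'$ becomes a ``slow growth'' condition on $H_{w_{[m]}}$ relative to $m$.

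For the inequality itself, I would analyse the period and border structure at the end of each $1$-run. Any closed prefix $w_{[m]}$ has period $\pi_{w_{[m]}} = |w_{[m]}| - H_{w_{[m]}} + 1$, since its longest border (of length $H_{w_{[m]}}-1$) is a complete first return. This period is \emph{preserved} throughout a $1$-run, because every extension that keeps the prefix closed is forced to be the unique letter prolonging the current border. Let $m$ be the end of the $n$th $1$-run and set $\pi = \pi_{w_{[m]}}$; the letter $w_{m+1}$ breaks this periodicity, and closedness of $w_{[m+1+j]}$ for some $j \ge 0$ requires a new complete first return border to emerge. The \emph{main obstacle} is to show that this cannot happen for $j < k_n$: the long border $b$ established by the $1$-run still occurs at its old (now internal) position in every intermediate prefix and obstructs any candidate extension $b' = b\cdot x$ of length $H_{w_{[m+1+j]}}-1$ from being a complete first return. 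Quantifying precisely how many further letters are needed to ``erase'' this obstruction --- and showing that this number is exactly $k_n$ --- is the heart of the argument.

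For the equivalence with characteristic Sturmian words I would argue both directions. If $w$ is characteristic Sturmian, then its palindromic (central) prefixes together with the Sturmian directive structure completely determine $\OC_w$; the computations available in~\cite{OCtcs17} (building on the characterizations of central words in~\cite{delustr}) show that $1$-runs and $0$-runs pair up with equal lengths, yielding $k_n = k_n'$. Conversely, assume $k_n = k_n'$ for all $n$; then at each endpoint $m_n = \sum_{i \le n}(k_i + k_i')$ of a $0$-run we have $H_{w_{[m_n]}} = m_n/2$, and combined with the general inequality $|w_{[m_n]}| \ge H_{w_{[m_n]}} + L_{w_{[m_n]}}$ this forces $L_{w_{[m_n]}} \le H_{w_{[m_n]}}$. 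Invoking Proposition~\ref{thm:ctrapSt} and Theorem~\ref{thm:openH=R} on the closed prefixes in the sequence $(w_{[m_n]})_n$ shows they are trapezoidal Sturmian with balanced characteristic parameters, and a standard limit argument identifies $w$ as a characteristic Sturmian word. The subtlety here lies in turning the bound $L \le H$ into the trapezoidal equality $|w_{[m_n]}| = H_{w_{[m_n]}} + L_{w_{[m_n]}}$, but this is where the structural input from Proposition~\ref{thm:ctrapSt} is crucial.
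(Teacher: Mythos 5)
First, note that the paper does not prove Theorem~\ref{thm:1k0k} at all: it is quoted from~\cite{OCtcs17} as background, so there is no internal proof to compare yours against; your attempt must stand on its own. Judged that way, it is a programme rather than a proof, with genuine gaps in both halves of the statement. For the inequality $k_n\leq k_n'$, your bookkeeping is sound: by Lemma~\ref{thm:H}, $H$ counts closed prefixes, the longest border of a closed prefix has length $H-1$, and the minimal period is constant along a $1$-run. But the entire content of the theorem --- that the border created during the $n$th $1$-run forces at least $k_n$ open prefixes before a new complete first return can appear --- is precisely the step you label ``the heart of the argument'' and do not carry out. Note also that what is needed is a lower bound ($j\geq k_n$ fails for all $j<k_n$), not the equality ``exactly $k_n$'' you announce: the $0$-run can be strictly longer, as the theorem itself asserts. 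Without a precise argument showing why the internal occurrence of the old border obstructs every candidate return word of insufficient length, nothing has been established.

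The equivalence is in worse shape. The forward direction (characteristic Sturmian $\Rightarrow k_n=k_n'$) is justified only by appeal to ``the computations available in~\cite{OCtcs17}'', i.e., to the very source of the theorem, which is circular in this setting. In the converse, the inequality $L_{w_{[m_n]}}\leq H_{w_{[m_n]}}=m_n/2$ is correctly derived from $|v|\geq H_v+L_v$, but you then invoke Proposition~\ref{thm:ctrapSt} and Theorem~\ref{thm:openH=R}, both of which require the prefixes in question to be trapezoidal --- exactly the equality $|w_{[m_n]}|=H+L$ that you admit you cannot yet extract from the bound $L\leq H$. (Moreover, $w_{[m_n]}$ is the last prefix of a $0$-run, hence open, so Proposition~\ref{thm:ctrapSt}, which concerns closed trapezoidal words, says nothing about it.) Finally, even granting that all prefixes are finite Sturmian, the concluding ``standard limit argument'' must distinguish characteristic Sturmian words from non-characteristic ones and from eventually periodic balanced words; this is where the hypothesis $k_n=k_n'$ for \emph{all} $n$ has to do real work, and no mechanism for that is indicated. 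As it stands, the proposal identifies the right objects but proves neither the inequality nor either direction of the characterization.
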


In terms of oc-sequences, an immediate consequence of Lemmas~\ref{thm:H}
and~\ref{thm:L} is the following (see also~\cite[Remark~8]{OCtcs17}):
\begin{proposition}
\label{thm:HLOC}
For any word $w$, $H_w$ is the number of closed non\-empty prefixes of $w$,
i.e., $H_w=|\OC_w|_1$. If $w$ is trapezoidal, then $L_w=|\OC_w|_0$.
\end{proposition}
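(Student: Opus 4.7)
The plan is to split the statement into its two assertions. The first, $H_w = |\OC_w|_1$, I would prove by induction on $|w|$. The base case $w = \varepsilon$ is immediate, since $H_\varepsilon = 0$ and $\OC_\varepsilon$ is the empty word. For the inductive step, writing $w = w'x$ with $x \in A$, Lemma~\ref{thm:H} says that $H_w - H_{w'}$ equals $1$ if $w$ is closed and $0$ if $w$ is open. But this increment is precisely the last symbol appended to $\OC_{w'}$ to form $\OC_w$, so $|\OC_w|_1 - |\OC_{w'}|_1$ records the same quantity, and the induction hypothesis carries through.

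For the second equality, I would invoke the defining relation $|w| = H_w + L_w$ for trapezoidal $w$ together with the obvious identities $|\OC_w| = |w|$ and $|\OC_w| = |\OC_w|_0 + |\OC_w|_1$. Combining these with the first part yields
\[
L_w = |w| - H_w = |\OC_w| - |\OC_w|_1 = |\OC_w|_0,
\]
as claimed.

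I do not expect any real obstacle here: the proposition is essentially a bookkeeping consequence of Lemma~\ref{thm:H}, and the trapezoidality hypothesis in the second part enters only through the relation $|w| = H_w + L_w$. The one minor formality to confirm is the convention $H_\varepsilon = 0$ together with the reading that $\OC_\varepsilon$ is empty, so that the base case of the induction is trivially valid; both follow directly from the definitions.
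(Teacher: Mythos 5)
Your proof is correct and matches what the paper intends: the paper gives no written proof, simply noting the proposition is "an immediate consequence of Lemmas~\ref{thm:H} and~\ref{thm:L}," and your induction on prefixes via Lemma~\ref{thm:H} is exactly that argument for the first claim. For the second claim you use the identity $|w|=H_w+L_w$ directly instead of summing the increments from Lemma~\ref{thm:L}, which is equivalent (Lemma~\ref{thm:L} is itself proved from that identity) and even slightly cleaner, since it sidesteps the need to know that every prefix of a trapezoidal word is trapezoidal.
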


%As before, this implies that for a trapezoidal $w$, $L_w$ is the number of
%nonempty open prefixes.
%\begin{lemma}[well known?]
%If $w$ is a closed word, then $H_{w}=K_{w}$.
%\end{lemma}
%

The following two results show the behavior of characteristic parameters $H$ and
$L$ at the end of runs of 1 and 0 in the oc-sequence.
\begin{proposition}
\label{thm:lastC}
Let $w\in A^*$ and $x\in A$ be such that $wx$ is an open trapezoidal word, while
$w$ is closed. Then $L_w<H_w$.
\end{proposition}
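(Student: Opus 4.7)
The plan is to pin down $H_w=K_w$ in terms of the longest border $b$ of $w$, identify $b$ as a right special factor of $wx$ of length $|b|$, observe that $b$ is \emph{not} right special in $w$ itself, and invoke the uniqueness of right special factors of each length in a trapezoidal word to conclude $R_w\leq|b|$, hence $L_w=R_w<|b|+1=H_w$.

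Concretely, Lemmas~\ref{thm:H} and~\ref{thm:L} give $H_{wx}=H_w$ and $L_{wx}=L_w+1$, so $|w|=|wx|-1=H_w+L_w$ and $w$ is itself trapezoidal. Applying Theorem~\ref{thm:openH=R} to the closed trapezoidal word $w$ yields $H_w=K_w$ and $L_w=R_w$. Let $b$ denote the longest border of $w$; since $w$ is closed, $b$ has no internal occurrences in $w$, so the prefix of $w$ of length $|b|$ equals $b$ and is repeated, while the prefix of length $|b|+1$ must be unrepeated (otherwise $b$ would admit an internal occurrence). This gives $H_w=|b|+1$.

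Since $wx$ is open, its longest repeated prefix is right special in $wx$; this prefix has length $H_{wx}-1=|b|$ and coincides with $b$, so $b$ is right special in $wx$. In $w$, however, the only occurrences of $b$ are the prefix and the suffix, and the latter has no successor inside $w$; thus $b$ has a single right extension in $w$ and is not right special in $w$.

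The last step invokes the fact that any trapezoidal word has at most one right special factor of each length---a direct consequence of the trapezoidal shape of the factor complexity, combined with the identity $p(n+1)-p(n)=\#\mathrm{RS}(n)-\#\mathrm{DE}(n)$, where the dead-end count $\#\mathrm{DE}(n)\in\{0,1\}$ and $p(n+1)-p(n)\in\{-1,0,1\}$. Therefore, any right special factor $v$ of $w$ of length $|b|$ would still be right special in $wx$, forcing $v=b$; but $b$ is not right special in $w$, a contradiction. Hence $R_w\leq|b|<|b|+1=H_w$, and by $L_w=R_w$ we conclude $L_w<H_w$. The main obstacle is this uniqueness step, which is not isolated as a lemma earlier in the excerpt and so requires a short justification from the trapezoid shape.
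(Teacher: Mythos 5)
Your argument is correct in outline but takes a genuinely different route from the paper's. You work on the right: you identify the longest border $b$ of $w$ with the longest repeated prefix of $wx$, hence with a right special factor of $wx$ of length $H_w-1$, observe that $b$ is not right special in $w$ itself, and conclude $R_w\leq H_w-1$ by uniqueness of right special factors of each length in the trapezoidal word $wx$. The paper works on the left instead: from $L_{wx}=L_w+1$ it deduces that the longest left special factor $\ell$ of $w$ occurs as a suffix of $w$; since a left special factor must occur elsewhere than as that suffix, while the longest border $v$ has no internal occurrences ($w$ being closed), one gets $|\ell|<|v|$ directly, i.e., $L_w-1<H_w-1$. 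The paper's route needs nothing beyond the definitions of closed and left special; yours hinges on the auxiliary uniqueness fact, which you rightly flag as the main obstacle. That fact is true and standard, but the justification you sketch does not suffice as stated: from $p(n+1)-p(n)=\#\mathrm{RS}(n)-\#\mathrm{DE}(n)$ with $\#\mathrm{DE}(n)\in\{0,1\}$ and $p(n+1)-p(n)\leq 1$ you only obtain $\#\mathrm{RS}(n)\leq 2$. To close the gap you need more of the trapezoidal structure, e.g.\ sum the identity over all $n$ and use $|wx|=K_{wx}+R_{wx}$: since $\sum_n\#\mathrm{DE}(n)=|wx|-K_{wx}=R_{wx}$ while $\#\mathrm{RS}(n)\geq 1$ for each of the $R_{wx}$ values $n<R_{wx}$ and $\#\mathrm{RS}(n)=0$ otherwise, every such term must equal exactly $1$. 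With that lemma properly established (and note you only need it for $wx$, not for $w$), your proof goes through.
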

\begin{proof}
Since $L_{wx}=L_w+1$ by Lemma~\ref{thm:L}, the longest left special factor
$\ell x$ of $wx$ occurs as a suffix, and $\ell$ is the longest left special
factor of $w$. Clearly, the suffix $\ell$ has internal occurrences in $w$, so
that it is strictly shorter than the longest border $v$. This proves %that
$L_w-1=|\ell|<|v|=H_w-1$, whence the assertion.
\qed
\end{proof}

\begin{proposition}
\label{thm:lastO}
Let $w\in A^*$ and $x\in A$ be such that $wx$ is a closed trapezoidal word, while
$w$ is open. Then $H_w\leq L_w$.
\end{proposition}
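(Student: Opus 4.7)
The plan is to adapt the argument of Proposition~\ref{thm:lastC} to the right-special side. Let $p$ be the longest repeated prefix of $w$, so $|p|=H_w-1$; since $w$ is open, $p$ is a right special factor of $w$. I will show that $p$ occurs as a suffix of $w$ with an internal occurrence, so that $|p|\leq K_w-1$.

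First, I observe that $w$ itself is trapezoidal. By Lemma~\ref{thm:H}, $H_{wx}=H_w+1$, and by Lemma~\ref{thm:L}, $L_{wx}=L_w$; from $|wx|=H_{wx}+L_{wx}$ we obtain $|w|=H_w+L_w$, so $w$ is trapezoidal. Theorem~\ref{thm:openH=R} applied to the open word $w$ then yields $K_w=L_w$, reducing the claim to $H_w\leq K_w$.

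Next, since $wx$ is closed, the prefix $v$ of $wx$ of length $H_w$ is a border of $wx$ with no internal occurrence in $wx$, as the proof of Lemma~\ref{thm:H} already shows. Writing $v=px$, the fact that $v$ is a suffix of $wx$ ending in $x$ gives that $p$ is a suffix of $w$. Being right special in $w$, the factor $p$ admits an extension $py$ with $y\in A\setminus\{x\}$ occurring in $w$; this occurrence cannot be at position $0$ (which is taken by $px$), and cannot end past the last position of $w$, so it places $p$ at an internal position of $w$. Hence $p$ is a repeated suffix of $w$, giving $|p|\leq K_w-1$, i.e., $H_w\leq K_w=L_w$.

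The main point to verify carefully is the identification of $v$ as a border of $wx$ without internal occurrences, after which the existence of the internal occurrence of $p$ in $w$ is essentially a counting argument. The parallel with Proposition~\ref{thm:lastC} also explains the non-strictness of the inequality here: nothing in the argument would be contradicted by $p$ itself being a longest repeated suffix of $w$.
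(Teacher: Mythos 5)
Your proposal is correct and follows essentially the same route as the paper: both proofs use the fact that, since $wx$ is closed, the longest repeated prefix of $w$ occurs as a suffix of $w$ (via Lemma~\ref{thm:H}), deduce $H_w\leq K_w$, and conclude by Theorem~\ref{thm:openH=R}. The extra step invoking the right speciality of $p$ to produce an internal occurrence is harmless but unnecessary, since the prefix occurrence of $p$ already witnesses that the suffix $p$ is repeated.
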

\begin{proof}
Since $H_{wx}=H_w+1$ by Lemma~\ref{thm:H}, the longest repeated prefix $v$ of $w$
occurs as a suffix, so that $H_w\leq K_w$. The assertion follows by
Theorem~\ref{thm:openH=R}.\qed
\end{proof}

While Theorem~\ref{thm:1k0k} gives local
constraints for an oc-sequence (namely, each run of 1s is followed by a longer
or equal run of 0s), our last three results
can be viewed as more global constraints in the case of trapezoidal words.
Considering the integer parameter
\[D_w:=H_{w}-L_{w}=|\OC_w|_1-|\OC_w|_0\]
gives an interesting way to picture this situation. Indeed
by Proposition~\ref{thm:HLOC}, if $w$ is trapezoidal then
$D_w=|\OC_w|_1-|\OC_w|_0$, so that $D$ increases or
decreases by 1 at each subsequent prefix, depending on whether it is closed or
open; moreover by Propositions~\ref{thm:lastC}--\ref{thm:lastO}, $D$
is necessarily positive (resp.~non-positive) when encountering the last
closed (resp.~open) prefix in a run.

\begin{example}
Let $w=baabaababab$. Then $w_{[n]}$ is closed for $n=1$ and $4\leq n\leq 8$,
and open otherwise; that is, $\OC_w=10011111000$. As predicted by
Propositions~\ref{thm:lastC}--\ref{thm:lastO}, $D$ reaches its (positive) local
maxima, respectively 1 and 4, at $n=1$ and $n=8$, and its (non-positive) local
minimum of $-1$ for $n=3$.
Since $w$ is not Sturmian, by Proposition~\ref{thm:ctrapSt} any subsequent
trapezoidal right extension will be open, leading to an indefinite decrease of
$D$.
\end{example}

\section{Special Trapezoidal Words}
\label{sec:sp}
In analogy with the case of finite Sturmian words (cf.~\cite{delumig,delustr}),
we say that a trapezoidal word $w\in A^*$ is \emph{right} (resp.~\emph{left})
\emph{special} if $wa$ and $wb$ (resp.~$aw,bw$) are both trapezoidal, and that
$w$ is strictly bispecial if $awa,awb,bwa$, and $bwb$ are all trapezoidal.
\begin{proposition}
\label{thm:rspSt}
A right special trapezoidal word is Sturmian.
\end{proposition}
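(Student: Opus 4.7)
The plan is to argue by contradiction, using the structural description of Theorem~\ref{thm:daless}. Assume $w$ is trapezoidal, right special, and non-Sturmian; write $w = pxux \cdot yuyq$ with $u$ central, $A=\{x,y\}$, and $\pi_{yuyq} = \pi_{yu}$. The pathological pair $\{xux,yuy\}$ survives in both $wa$ and $wb$, so these too are non-Sturmian; being trapezoidal by hypothesis, Proposition~\ref{thm:ctrapSt} makes $w$, $wa$, and $wb$ all open. Corollary~\ref{thm:RKtrap} then yields $K_{wa} = K_w+1 = |yuyq|+1$, and symmetrically for $wb$.

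Apply Theorem~\ref{thm:daless} to $wa$, obtaining a decomposition $wa = p_ax_au_ax_a \cdot y_au_ay_aq_a$ whose right factor is the suffix of $wa$ of length $K_{wa}=|yuyq|+1$, hence exactly $yuyqa$. Matching positions forces $y_a = y$, and $u_a$ (which is central) equals the prefix of $u$ of length $|u_a|$. Moreover $|u_a| \le |u|$, since the shortest pathological pair of $wa$ cannot exceed $\{xux,yuy\}$, already present in $wa$. The theorem's period condition therefore reads $\pi_{yuyqa} = \pi_{y_au_a}$.

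The heart of the argument is a sandwich on this period. Since $y_au_a$ is a prefix of $yu$, every period of $yu$ restricts to a period of $y_au_a$, whence $\pi_{y_au_a} \le \pi_{yu}$. Conversely, $yuyq$ is a prefix of $yuyqa$, and since a one-letter extension can only preserve or strictly increase the minimal period, $\pi_{yuyqa} \ge \pi_{yuyq} = \pi_{yu}$. Combining, $\pi_{yuyqa} = \pi_{yu}$; the identical argument applied to $wb$ gives $\pi_{yuyqb} = \pi_{yu}$.

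To finish, set $p = \pi_{yu}$. The requirement that $yuyqc$ have period $p$ uniquely determines $c$ as the letter of $yuyq$ at position $|yuyq|-p+1$. Hence at most one of $yuyqa, yuyqb$ can have minimal period $p$, contradicting the previous paragraph. The main subtle point addressed by this plan is the a priori possibility that the decomposition of $wa$ (or $wb$) involves a strictly shorter central word $u_a \neq u$; the sandwich is precisely what rules this out, pinning the minimal period of both extended suffixes to $\pi_{yu}$ and triggering the period-uniqueness contradiction.
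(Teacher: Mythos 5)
Your proof is correct and follows essentially the same route as the paper's: both deduce that $w$ and its trapezoidal extensions are open non-Sturmian, use Corollary~\ref{thm:RKtrap} to fix $K_{wz}=K_w+1$ and hence the right factor $yuyqz$ in the decomposition of Theorem~\ref{thm:daless}, and conclude that the period condition $\pi_{yuyqz}=\pi_{yu}$ determines $z$ uniquely. Your ``sandwich'' argument carefully justifies a step the paper leaves implicit --- namely that the central word in the decomposition of $wz$ could a priori be shorter than $u$, yet still forces the same minimal period --- which is a worthwhile addition rather than a deviation.
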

\begin{proof}
Let $w$ be a non-Sturmian trapezoidal word, then open by
Proposition~\ref{thm:ctrapSt}.
If $z\in A$ and $wz$ is trapezoidal, then it is also
not Sturmian (like $w$) and hence open. By Corollary~\ref{thm:RKtrap},
$R_{wz}=R_w$ and $K_{wz}=K_w+1$. By Theorem~\ref{thm:daless}, it follows
$wz=pxux\cdot yuyqz$ with $\pi_{yuyqz}=\pi_{yu}=\pi_{yuyq}$. This shows that
$z$ is uniquely determined, so that $w$ cannot be right special.
\qed
\end{proof}
Symmetrically, one can prove that
\begin{proposition}
\label{thm:lspSt}
A left special trapezoidal word is Sturmian.
\end{proposition}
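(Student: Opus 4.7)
The plan is to mirror the proof of Proposition~\ref{thm:rspSt}, taking advantage of the left-right symmetry already captured in Lemma~\ref{thm:K} (whose ``moreover'' clause is the exact analogue, for left extensions, of Corollary~\ref{thm:RKtrap}).

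First I would assume, for contradiction, that $w$ is a non-Sturmian trapezoidal word. By Proposition~\ref{thm:ctrapSt}, such a $w$ must be open. For any $z\in A$ such that $zw$ is trapezoidal, the word $zw$ still contains $w$ as a factor and therefore inherits its pathological pair; hence $zw$ is non-Sturmian and, again by Proposition~\ref{thm:ctrapSt}, open. Thus both $w$ and $zw$ are open, and applying the second assertion of Lemma~\ref{thm:K} gives $K_{zw}=K_w$ and $R_{zw}=R_w+1$.

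Next I would decompose $w$ via Theorem~\ref{thm:daless} as $w=pxux\cdot yuyq$, with $R_w=|pxux|$, $K_w=|yuyq|$, $\pi_{pxux}=\pi_{ux}$, and $\pi_{yuyq}=\pi_{yu}$. Because $\{xux,yuy\}$ is the \emph{shortest} pathological pair of $w$ (cf.\ the remark after Theorem~\ref{thm:daless}) and prepending one letter cannot create a shorter pathological pair, the same pair serves for $zw$; in particular the central word $u$ and the letters $x,y$ in the decomposition given by Theorem~\ref{thm:daless} for $zw$ are the same as for $w$. Combined with $R_{zw}=|pxux|+1$ and $K_{zw}=|yuyq|$, this forces the decomposition $zw=zpxux\cdot yuyq$ with the period condition $\pi_{zpxux}=\pi_{ux}=\pi_{pxux}$. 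Since a word of period $\pi$ admits at most one left extension of period $\pi$ (the letter is determined by the one occurring $\pi$ positions later), $z$ is uniquely determined, contradicting the assumption that $w$ is left special.

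The main obstacle I anticipate is the quiet step where I claim that the pathological pair of $zw$ coincides with that of $w$, so that Theorem~\ref{thm:daless} applied to $zw$ really does produce the same $u$, $x$, $y$. If that turns out to need more than a one-line justification, the alternative route is to replicate the right-extension proof almost verbatim, observing that the condition in Theorem~\ref{thm:daless} on the left factor of the decomposition is exactly a periodic-extension condition on $pxux$, which again leaves only one choice for the prepended letter.
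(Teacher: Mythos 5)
Your proposal is correct and is essentially the paper's own argument: the paper proves Proposition~\ref{thm:rspSt} and dismisses Proposition~\ref{thm:lspSt} with ``symmetrically, one can prove,'' and your write-up is exactly that mirrored proof (open non-Sturmian $w$, the ``moreover'' clause of Lemma~\ref{thm:K} giving $R_{zw}=R_w+1$ and $K_{zw}=K_w$, and the periodic-extension condition $\pi_{zpxux}=\pi_{ux}$ from Theorem~\ref{thm:daless} forcing a unique prepended letter). The subtlety you flag about the decomposition of $zw$ inheriting the same $u,x,y$ is present to exactly the same degree in the paper's proof of the right-special case, so your argument matches the paper's level of detail.
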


\begin{theorem}
\label{thm:rspTr}
A trapezoidal word $w$ is right special if and only if either of the
following holds:
    \begin{enumerate}
        \item $w$ is a suffix of a central word, or
        \item $w=pxuxyu$ for a central word $u$, distinct letters $x,y$,
        and a word $p$ such that
        $\pi_{pxux}=\pi_{ux}$.
    \end{enumerate}
Symmetrically, $w$ is a left special trapezoidal word if and only if it is
either a prefix of a central word, or written as $w=uxyuyq$ for $x,y\in A$,
$x\neq y$, and $\pi_{yuyq}=\pi_{yu}$.
\end{theorem}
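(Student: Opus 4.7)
The plan is to prove the right special characterization in both directions, and then derive the symmetric left special claim by reversal. For the forward implication, I assume $w$ is right special trapezoidal. By Proposition~\ref{thm:rspSt}, $w$ is Sturmian, so I split according to whether both $wa, wb$ are Sturmian. If they are, $w$ is right special Sturmian, and the classical characterization from~\cite{delustr} identifies $w$ as a suffix of a central word, giving~(1). If not, some extension $wy$ is trapezoidal non-Sturmian. Applying Theorem~\ref{thm:daless} to $wy$ gives $wy = pxux \cdot yuyq$ with the stated period constraints. Writing $w$ as $wy$ minus its last letter, one of two things happens: either $q = \varepsilon$ and $w = pxuxyu$ (matching~(2)), or $q$ ends in $y$ and the block $yuy$ already occurs in $w$, combining with the prefix occurrence of $xux$ to form a pathological pair in $w$, contradicting $w$ being Sturmian. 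Hence~(2) must hold.

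For the reverse direction, Case~(1) is immediate: a suffix $w$ of a central word $c$ extends to $wa, wb$, which are suffixes of the Sturmian words $ca, cb$ (central words being strictly bispecial), so both extensions are trapezoidal. Case~(2) is the main technical point. Given $w = pxuxyu$ with $\pi_{pxux} = \pi_{ux}$, the extension $wy = pxux \cdot yuy$ is trapezoidal non-Sturmian by Theorem~\ref{thm:daless}, using the standard fact $\pi_{yuy} = \pi_{yu}$ for central $u$. For the other extension $wx = pxux \cdot yux$, the plan is to show $wx$ is Sturmian, and hence trapezoidal. A first observation is that $w$ itself is Sturmian: it is a prefix of $wy$, whose unique shortest pathological pair $\{xux, yuy\}$ has $yuy$ occurring only as the final block, so removing the last letter $y$ eliminates $yuy$ and leaves no pathological pair behind.

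The main obstacle is ruling out a pathological pair in $wx = w \cdot x$. Any such pair $\{x'v'x', y'v'y'\}$ not already present in the Sturmian word $w$ must involve a factor newly created by appending $x$, that is, a suffix of $wx$ ending at the new letter. Since $y'v'y'$ ends in $y' \neq x$ unless $y' = x$, the new factor must be $xv'x$, matched by an existing occurrence of $yv'y$ in $w$. I plan to exploit the period condition $\pi_{pxux} = \pi_{ux}$ together with the central structure of $u$ (in particular, the fact that around a central word the only Sturmian-compatible bilateral extensions are $xuy$ and $yux$, not $xux$ or $yuy$) to show that no such suffix $xv'x$ can fit consistently with the periodic prefix. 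Finally, the symmetric left special characterization follows by applying the right special case to the reversal $\widetilde{w}$: reversal preserves trapezoidality (swapping $H \leftrightarrow K$ and $L \leftrightarrow R$) and turns right specialness into left specialness; central words are palindromes, so a suffix of a central word becomes a prefix of a central word, and the structural form $w = pxuxyu$ with $\pi_{pxux} = \pi_{ux}$ becomes $\widetilde{w} = uyxux\widetilde{p}$, matching the stated left special condition after the appropriate relabeling of letters and periods.
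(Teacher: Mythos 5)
Your overall architecture matches the paper's: reduce to the classical Sturmian characterization when both extensions are Sturmian, and otherwise apply Theorem~\ref{thm:daless} to the non-Sturmian extension. Your forward direction is fine and in fact more explicit than the paper's: the observation that $q\neq\varepsilon$ would leave both $xux$ and $yuy$ inside $w$, contradicting that $w$ is Sturmian (which you get from Proposition~\ref{thm:rspSt}), is exactly the right way to force $q=\varepsilon$. The reversal argument for the left special statement is also sound, since reversal preserves trapezoidality and periods and central words are palindromes.

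The genuine gap is in the converse of case~(2), at the step you yourself flag as ``the main technical point'': showing that $wx=pxuxyux$ is Sturmian. You only describe a \emph{plan} --- rule out every pathological pair $\{xv'x,yv'y\}$ created by the appended letter, ``exploiting the period condition together with the central structure of $u$'' --- but you never carry it out, and carrying it out would require a nontrivial combinatorial analysis of how a suffix $xv'x$ of $wx$ could be matched by an occurrence of $yv'y$ in $w$. This work is unnecessary. Once you know (as you correctly argue) that $w$ is Sturmian, recall that a finite Sturmian word is by definition a factor of an infinite Sturmian word, hence it occurs followed by some letter in that infinite word; so $w$ has at least one Sturmian right extension. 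Since $wy$ is non-Sturmian, that extension must be $wx$, which is therefore Sturmian and hence trapezoidal. This one-line argument is what the paper's terse ``therefore, $wx$ must be Sturmian (and then trapezoidal) too'' is relying on, and it is the piece you should substitute for your unexecuted case analysis.
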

\begin{proof}
As is well known (cf.~\cite{delustr}), both extensions $wa,wb$ of a word $w$
are Sturmian if and only if $w$ is a suffix of a central word.
Let now $w$ be right special and such that one extension is not Sturmian.
By Proposition~\ref{thm:rspSt}, $w$ is Sturmian. As a consequence of
Theorem~\ref{thm:daless}, we must have $w=pxux\cdot yu$ where $A=\{x,y\}$,
$u$ is some central word, and $p$ is such that $\pi_{pxux}=\pi_{ux}$.

Conversely, if $w=pxuxyu$ with $A=\{x,y\}$, $u$ central and
$\pi_{pxux}=\pi_{ux}$, then $w$ is Sturmian as $\{xux,yuy\}$ is the only
pathological pair in the trapezoidal non-Sturmian word $wy$; therefore, $wx$
must be Sturmian (and then trapezoidal) too.

The left special case is similar.
\qed
\end{proof}

The following theorem is a restatement of results in~\cite{ficiBisp,delustr}; it
characterizes Sturmian words that are bispecial (\emph{as Sturmian words}).
\begin{theorem}
\label{thm:fbisp}
Let $w\in A^*$. Then $wa,wb,aw,bw$ are all Sturmian if and
only if $w=(uxy)^nu$ for some central word $u$, $\{x,y\}=A$ and a
nonnegative integer $n$.
Furthermore, $awa,awb,bwa,bwb$ are all Sturmian if and only if $w$ is central,
i.e., $n=0$, whereas for $n>0$ exactly one such bilateral extension is not
Sturmian, namely $xwy$.
\end{theorem}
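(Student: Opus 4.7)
The plan is to assemble the theorem from three ingredients: the characterization of right-special Sturmian words as suffixes of central words from~\cite{delustr}, its left-symmetric counterpart, and the combined bispecial characterization from~\cite{ficiBisp}.

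For the first equivalence, I would proceed as already done implicitly in the proof of Theorem~\ref{thm:rspTr}. By~\cite{delustr}, $wa,wb$ are both Sturmian iff $w$ is a suffix of a central word; symmetrically, $aw,bw$ are both Sturmian iff $w$ is a prefix of a central word. Hence all four unilateral extensions $wa,wb,aw,bw$ are Sturmian precisely when $w$ is simultaneously a prefix and a suffix of (possibly different) central words. By~\cite{ficiBisp}, such words are exactly those of the form $(uxy)^n u$ with $u$ central, $\{x,y\}=A$, and $n\geq 0$.

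For the second equivalence, I would invoke~\cite{delumig}, which characterizes strictly bispecial Sturmian words (i.e., those $w$ with $awa,awb,bwa,bwb$ all Sturmian) as precisely the central words. For $w=(uxy)^n u$ this forces $n=0$. To complete the statement, for $n>0$ I would first exhibit the pathological pair witnessing that $xwy$ is non-Sturmian: writing $xwy = x(uxy)^n u y$, we see $xux$ occurring as a prefix and $yuy$ occurring as a suffix, and $\{xux,yuy\}$ is pathological because $u$ is central.

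The main obstacle, to be addressed last, is verifying that the remaining three bilateral extensions $ywx$, $xwx$, $ywy$ are Sturmian whenever $n>0$. Since the first part of the proof already gives that $xw,yw,wx,wy$ are Sturmian, any pathological pair appearing in a $zwz'$ must involve factors exploiting both the leading $z$ and the trailing $z'$; a structural analysis of the occurrences of blocks $xux$ and $yuy$ inside $(uxy)^n u$—very constrained since $u$ is central, hence palindromic with tightly controlled special factors—should rule out pathologies in the other three cases. Alternatively, since the present theorem is stated as a restatement, one may simply cite~\cite{ficiBisp}, where the complete content of the second part is already established.
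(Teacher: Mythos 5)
The paper offers no proof of this theorem---it is explicitly presented as a restatement of results from \cite{ficiBisp} and \cite{delustr}---and your proposal assembles it from exactly those sources (plus \cite{delumig} for the strictly bispecial case), so it matches the paper's approach. The one step you leave genuinely unproved (that $ywx$, $xwx$, $ywy$ remain Sturmian when $n>0$) is correctly delegated to \cite{ficiBisp}, which is precisely what the paper does.
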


\emph{Semicentral} words were defined in~\cite{trapezo} by the property of
having their longest repeated prefix, longest repeated suffix, longest left
special factor, and longest right special factor coincide.
In the same paper, they were characterized as words $w$ such that $w=uxyu$ for
some central word $u$ over $A=\{x,y\}$. Hence, they correspond to the case
$n=1$ in the previous theorem.

Our final result is a characterization of strictly bispecial trapezoidal
words.
\begin{theorem}
A trapezoidal word is strictly bispecial if and only if it is central or
semicentral.
\end{theorem}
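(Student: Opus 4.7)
The plan is to prove the two directions separately.

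For the ``if'' direction, if $w$ is central, Theorem~\ref{thm:fbisp} with $n=0$ immediately gives that $awa$, $awb$, $bwa$, and $bwb$ are all Sturmian, hence trapezoidal. If $w = uxyu$ is semicentral, the same theorem with $n=1$ gives that three of the four bilateral extensions are Sturmian, while the remaining one is $xwy = xux \cdot yuy$, which is non-Sturmian. To show it is nevertheless trapezoidal, I would apply Theorem~\ref{thm:daless} with the decomposition $p = q = \varepsilon$; the needed period conditions $\pi_{xux} = \pi_{ux}$ and $\pi_{yuy} = \pi_{yu}$ follow from the palindromicity of $u$. Specifically, writing $\rho = \pi_{ux}$, the equality $ux[0 \dots \rho-1] = ux[|ux|-\rho \dots |ux|-1]$ forces $u[|u|-\rho] = x$, and since $u$ is a palindrome this equals $u[\rho-1]$, which is exactly the letter needed for the period to propagate when prepending $x$.

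For the ``only if'' direction, let $w$ be trapezoidal and strictly bispecial. My first step would be to show $w$ is Sturmian: if not, Theorem~\ref{thm:daless} yields $w = p \cdot xux \cdot yuy \cdot q$, and the pair $\{xux, yuy\}$, being the shortest pathological pair, remains the shortest in every bilateral extension $cwd$. Hence each $cwd$ would have to decompose as $cp \cdot xux \cdot yuy \cdot qd$, but the period condition $\pi_{cp \cdot xux} = \pi_{ux}$ uniquely determines $c$, and symmetrically for $d$. At most one of the four bilateral extensions could satisfy both conditions, contradicting strict bispeciality.

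Once $w$ is known to be Sturmian, I would split on whether all four of $awa, awb, bwa, bwb$ are Sturmian. If so, then $wa, wb, aw, bw$ are Sturmian (as factors of Sturmian words), so Theorem~\ref{thm:fbisp} gives $w = (uxy)^n u$; any $n \geq 1$ would force $xwy$ to be non-Sturmian, contradicting the hypothesis, so $n = 0$ and $w$ is central. Otherwise, some bilateral extension $xwy$ is non-Sturmian trapezoidal, with shortest pathological pair $\{zUz, \bar z U \bar z\}$; since $w$ is Sturmian, at least one of the pair's elements must involve one of the added outer letters. The configuration compatible with strict bispeciality is the one in which both do, forcing $zUz = xUx$ at the start, $\bar z U \bar z = yUy$ at the end, $x \neq y$, and $p = q = \varepsilon$, which gives $w = UxyU$ semicentral. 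The alternative configurations (only one pair element uses an added letter, or $x = y$) should be ruled out by showing that a sibling bilateral extension (such as $xwx$ or $bwb$) would require a period extension with the opposite letter, so cannot be trapezoidal.

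The main obstacle is the period-extension bookkeeping: verifying $\pi_{xux} = \pi_{ux}$ for central $u$ (where palindromicity enters subtly), and the case analysis ruling out the intermediate configurations in the ``only if'' direction, each by a period-conflict argument exploiting the uniqueness of the letter that extends a given period.
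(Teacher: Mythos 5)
Your ``if'' direction and the reduction to the case where $w$ is Sturmian are sound and essentially match the paper (the paper likewise invokes Theorem~\ref{thm:fbisp} for central words and Theorem~\ref{thm:daless} for $xuxyuy$; your explicit verification of $\pi_{xux}=\pi_{ux}$ via palindromicity of $u$ is extra detail the paper leaves implicit). The genuine gap is in the non-central case of the ``only if'' direction. There you attempt a direct case analysis on where the shortest pathological pair of a non-Sturmian bilateral extension sits, and you explicitly leave the ``alternative configurations'' to be ``ruled out by showing that a sibling bilateral extension would require a period extension with the opposite letter.'' That is a placeholder, not an argument, and the period-conflict idea as stated does not obviously close it: for instance, when only one element of the pathological pair uses an added outer letter, the pair already lives in $wy$ (say), and what you actually need is a reason why a one-sided extension cannot be simultaneously non-Sturmian, trapezoidal, and special.

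The missing tool is precisely Propositions~\ref{thm:rspSt} and~\ref{thm:lspSt}. If $cwa$ is trapezoidal non-Sturmian, then $cw$ is right special trapezoidal (both $cwa$, $cwb$ are trapezoidal by hypothesis), hence Sturmian, and by Theorem~\ref{thm:rspTr} at most one of its right extensions is non-Sturmian, so $cwb$ is Sturmian; symmetrically $dwa$ is Sturmian. This yields that $wa,wb,aw,bw$ are all Sturmian, whereupon Theorem~\ref{thm:fbisp} gives the normal form $w=(uxy)^nu$ with $n>0$, and Theorem~\ref{thm:daless} forces $n=1$, i.e., $w$ semicentral. This is the paper's route; it replaces your entire configuration analysis (including the $p\neq\varepsilon$, $q\neq\varepsilon$, and $x=y$ subcases) with one appeal to the normal form. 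Your first step (Sturmian-ness of $w$) also becomes immediate from Proposition~\ref{thm:rspSt}, avoiding the slightly loose claim that the shortest pathological pair of $w$ ``remains the shortest'' in every bilateral extension, which you do not justify. As written, your proposal does not constitute a complete proof of the ``only if'' direction.
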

\begin{proof}
By Theorem~\ref{thm:fbisp}, central words are strictly bispecial.
Moreover, by the same theorem all bilateral extensions of a semicentral
word $uxyu$ are Sturmian, except for $xuxyuy$ which is trapezoidal non-Sturmian
by Theorem~\ref{thm:daless}.

Conversely, if $w$ is a strictly bispecial trapezoidal word, then either all
bilateral extensions are Sturmian, in which case $w$ is central by
Theorem~\ref{thm:fbisp} and we are done, or at least one is not.

Assume, for instance, that
$cwa$ is trapezoidal non-Sturmian, the other cases being similar. By
Proposition~\ref{thm:rspSt}, $cw$ is Sturmian, so that $cwb$ must be too.
Symmetrically, as a consequence of Proposition~\ref{thm:lspSt}, $dwa$ must be
Sturmian as well (where $\{c,d\}=A$).
In all cases, $wa,wb,aw,bw$ are all Sturmian. By Theorem~\ref{thm:fbisp}, it
follows $w=(uxy)^nu$ for some $n>0$; as a consequence of
Theorem~\ref{thm:daless}, we must have $n=1$.
\qed
\end{proof}

\section{Concluding Remarks}
A few related problems remain open. In particular, in~\cite{OCtcs17} the
oc-sequence for (prefixes of) characteristic Sturmian words was characterized,
see Theorem~\ref{thm:1k0k}. The general
trapezoidal case, and even the non-standard Sturmian one, is still open.
We believe our results may shed some light on the matter, as illustrated at the
end of Section~\ref{sec:OC}.

Regarding the preceding section, a simple and elegant characterization of
(not necessarily strictly) bispecial trapezoidal words, such as
Theorem~\ref{thm:fbisp} is for the Sturmian case, is still missing.
Theorem~\ref{thm:rspTr} might be an ingredient for such a result.

\section*{Acknowledgments}
We thank the anonymous referees for their many helpful comments.
This paper is dedicated to the memory of our dear colleague
Aldo de Luca (1941--2018).

%\bibliographystyle{splncs04}
%\bibliography{trapRK}

\end{document}